\DeclareMathAlphabet\mathbfcal{OMS}{cmsy}{b}{n}
\newcommand{\todoa}[2][]{\xspace\todo[size=\tiny,color=blue!20!white,#1]{A: #2}}
\newcommand{\todoe}[2][]{\xspace\todo[size=\tiny,color=orange!20!white,#1]{E: #2}}
\newcommand{\idle}{\mathrm{i}}
\newcommand{\retx}{\mathrm{x}}
\newcommand{\new}{\mathrm{n}}
\newcommand{\sense}{\mathrm{n}}
\def\blfootnote{\xdef\@thefnmark{}\@footnotetext}
\DeclareMathOperator*{\argmin}{arg\,min} 
\DeclareMathOperator*{\argmax}{arg\,max} 
\newtheorem{theorem}{Theorem}
\newtheorem{problem}{Problem}
\newtheorem*{previous theorem}{Theorem~1 of \cite{journal_paper}}
\newcommand{\Exp}[1]{\mathbb{E}\left[ #1 \right]} 
\newcommand{\Exppi}[1]{\mathbb{E}\left[ #1 \right]} 
\title{Learning to Minimize Age of Information over an Unreliable Channel with Energy Harvesting}
\author{
\IEEEauthorblockN{Elif Tu\u{g}\c{c}e Ceran, Deniz G{\"u}nd{\"u}z, and Andr\'as Gy\"orgy}
}
\begin{document}

\maketitle

\begin{abstract}
The time average expected age of information (AoI) is studied for status updates sent over an error-prone channel from an energy-harvesting transmitter with a finite-capacity battery. Energy cost of sensing new status updates is taken into account as well as the transmission energy cost better capturing practical systems. The optimal scheduling policy is first studied under the hybrid automatic repeat request (HARQ) protocol when the channel and energy harvesting statistics are known, and the existence of a threshold-based optimal policy is shown. For the case of unknown environments,  average-cost reinforcement-learning algorithms are proposed that learn the system parameters and the status update policy in real-time. The effectiveness of the proposed methods is demonstrated through numerical results.
\end{abstract}

\blfootnote{Part of this work is was presented at the  IEEE Conference on Computer Communications Workshops (INFOCOM WKSHPS), Paris, France, April 2019 \cite{infocom_paper}.}
\blfootnote{This work was supported in part by the European Research Council (ERC) Starting Grant BEACON (grant agreement no. 677854).
E.~T.~Ceran and D.~G\"und\"uz are with 
Imperial College London, UK (email: \texttt{\{e.ceran14, d.gunduz\}@imperial.ac.uk}. A.~Gy\"orgy is with DeepMind, 
UK (email: \texttt{agyorgy@google.com}).}

\begin{IEEEkeywords} 
Age of information, energy harvesting, hybrid automatic repeat request (HARQ),  Markov decision process, reinforcement learning, policy gradient, deep Q-network (DQN).
\end{IEEEkeywords}

\section{Introduction}

Many status update systems, including wireless sensor networks in Internet of things (IoT) applications, are powered by scavenging energy from renewable sources (e.g., solar cells \cite{solar}, wind turbines \cite{wind}, piezoelectric generators \cite{vibration}, etc.). Harvesting energy from ambient sources provides environmentally-friendly and ubiquitous operation for remote sensing systems.  Therefore, there has been a growing interest in maximizing the timeliness of information in energy harvesting (EH) communication systems \cite{Tan2015,Yates2015,Tan2017,Arafa2017,Wu2018,Tan2018,Feng2018,ArafaPoor2018,pappas2019,hatami2020,gind2020,pappas2020}. In these systems, the staleness of the information at the receiver is measured by the age of information (AoI), defined as the time elapsed since the generation time of the most recent status update  successfully received at the receiver.




Prior works have investigated online \cite{Tan2015,Tan2017,Tan2018,pappas2019,hatami2020} as well as offline \cite{Tan2015,Arafa2017,gind2020} methods for different scenarios in order to optimize the timeliness of information under energy causality constraints in EH systems. The structure of an optimal policy is derived and heuristic algorithms are proposed in \cite{Tan2017,Tan2018,ArafaPoor2018,pappas2019} 
for a finite-size battery considering only the cost of transmissions. Until recently, literature on AoI assumed that the cost of sensing (monitoring) the status of a process is negligible compared to the cost of transmitting the status updates. However, in many practical sensing systems acquiring a new sample of the underlying process of interest also has a considerable energy cost \cite{sensing_energy, Gong2018}. The sampling/sensing cost has been taken into account in \cite{Gong2018} and \cite{Zhou2018}, where a status update system with automatic repeat request (ARQ) and an unlimited energy source is considered. In \cite{Gong2018}, closed form expressions are presented for the energy consumption and average AoI with known transmission error probability, assuming that a packet is re-transmitted until either it is received, or a prescribed maximum number of transmissions is reached.  In our previous work, we studied status-update systems under a transmission-rate constraint, or equivalently, an average power constraint \cite{wcnc_paper, pimrc_paper, journal_paper}.






In this paper, we study a status update system considering both the sensing and transmission energy costs, better capturing practical systems. \todoa{What is better here? Other works also consider transmission energy. Why is this sentence not in the previous paragraph?}\todoe{modified} Moreover, we consider an EH transmitter, which uses the energy harvested from the environment to power the sensing and communication operations. Unlike prior work \cite{Gong2018}, we consider the intermittent availability of energy and a hybrid automatic repeat request (HARQ) protocol, where the partial information obtained from previous unsuccessful transmission attempts is combined to increase the probability of successful decoding. When employing HARQ in status update systems, there is an inherent trade-off between sending a new update after a failed transmission attempt, which may result in a lower AoI, and retransmitting the failed update, which has a lower probability of error. Introducing sensing cost to the system model makes this trade-off even more challenging and interesting, as retransmissions incur no sensing cost. 

In most practical scenarios statistical information about either the energy arrival process or the channel conditions are not available a-priori, or may change over time \cite{Gunduz2014}. Previous works on EH communication systems without a-priori information on the random processes governing the system exploited reinforcement learning (RL) methods in order to maximize throughput or minimize delay \cite{Blasco2013, Ortiz2016}. In this paper, we propose RL algorithms that can adapt the status-update scheme to the unknown energy arrival process as well as the channel statistics.

Our goal will be to identify the optimal policy that can judiciously balance the AoI benefits of transmitting a new status update with its additional sensing cost and lower success probability. The optimal decisions will depend not only on the current AoI and retransmission count, but also on the battery and energy harvesting states. We consider a value-based RL algorithm, \emph{GR-learning} \cite{Gosavi2004}, a policy-based RL algorithm, \emph{finite-difference policy gradient (FDPG)} \cite{policy_gradient}, and a deep RL algorithm \emph{deep Q-network (DQN)}, and compare their performances with that of the relative value iteration (RVI) algorithm which assumes a-priori knowledge on the system characteristics. We propose threshold policy with low computational complexity \todoa{Why do we propose a suboptimal policy? Why do we use a threshold policy?}\todoe{Modified} and demonstrate that a policy gradient algorithm exploiting the structural characteristics of a threshold policy outperforms the GR-learning algorithm. 


The main contributions of this paper are outlined as follows:

\begin{itemize}
\item The average AoI is studied under energy replenishment constraints, i.e., energy causality as well as battery capacity constraints, imposed on the transmitter, which limits the energy consumption in a stochastic manner.
\item The optimal policy is shown to be stationary, deterministic, and monotone with respect to the AoI, while both retransmissions and preemption\todoa{This is not defined yet.}\todoe{I thought preemption is a standard definition, do we need to define it? Or did I misunderstood the comment?} following a failed transmission are considered. 
\item Scheduling algorithms are designed using multiple average-cost RL algorithms; in particular, a value-based RL algorithm, \emph{GR-learning} \cite{Gosavi2004}, a policy-based RL algorithm, \emph{FDPG} \cite{policy_gradient}, and a deep RL algorithm, \emph{(DQN)}, are used to learn the optimal scheduling decisions when the transmission success probabilities and energy arrival statistics are unknown. 
\item Numerical simulations are conducted in order to investigate the effects of the EH process on the average AoI. In particular, we have found that temporal correlations in EH increase the average AoI significantly.
\item  The average AoI with EH is compared with the average AoI under an average transmission constraint, and it is shown that the average AoI obtained by the EH transmitter is similar \todoa{What does "approximates" mean? Does it mean they are similar? In what sense?}\todoe{modified} to the one obtained under an average transmission constraint for a battery with unlimited capacity and zero sampling/sensing cost.
\end{itemize}

The rest of the paper is organized as follows. The system model is presented in Section~\ref{sec:system_model_eh}. The problem of minimizing the average AoI with HARQ under energy replenishment constraints is formulated as a Markov decision process (MDP) and the structure of the optimal policy is investigated in Section~\ref{sec:dynamic}. Section~\ref{sec:learning_EH} shows the application of RL algorithms to minimize the AoI in an unknown environment.  Simulation results are presented in Section~\ref{sec:results_EH}, and the paper is concluded in Section~\ref{sec:conclusion_EH}.


\section{System Model}
\label{sec:system_model_eh}
\begin{figure}
\centering
\includegraphics[scale=0.4]{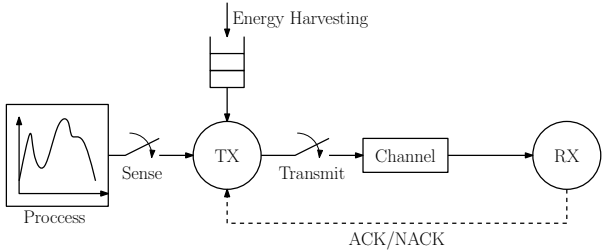}
\caption{An EH status update system over an error-prone link in the presence of ACK/NACK feedback.}
\label{fig:system_EH}
\end{figure}

We consider a time-slotted status update-system over an error-prone wireless communication link (see Figure~\ref{fig:system_EH}). The transmitter (TX) can sense the underlying time-varying process and  generate a status update in each time slot at a certain energy cost. Status updates are communicated to the receiver (RX) over a time-varying wireless channel. Each transmission attempt takes constant time, which is assumed to be equal to the duration of one time slot. 

The AoI measures the timeliness of the status information at the receiver, and is defined at any time slot $t$ as the number of time slots elapsed since the generation time $U(t)$ of the most up-to-date packet successfully decoded at the receiver. Formally, the AoI at the receiver at time $t$ is defined as $\Delta^{rx}_t\triangleq \min(t-U(t),\Delta_{max})$, where a maximum value $\Delta_{max}$ on the AoI is imposed to limit the  impact of the AoI on the performance after some level of staleness is reached.\todoa{Clarify if receiving the update in the same time slot counts to what age.}\todoe{Sorry, I couldn't understand the comment}

We assume that the channel changes randomly from one time slot to the next in an independent and identically distributed (i.i.d.) fashion, and the instantaneous channel state information is available only at the receiver. We further assume the availability of an error- and delay-free single-bit feedback from the receiver to the transmitter for each transmission attempt. Successful reception of the status update at the end of time slot $t$ is acknowledged by an ACK signal (denoted by $K_t=1$), while a NACK signal is sent in case of a failure (denoted by $K_t=0$).

There are three possible actions the transmitter can take in each time slot $t$ (the transmitter's action is denoted by $A_t$). It can either sample and transmit a new status update ($A_t=\new$), remain idle ($A_t=\idle $), or retransmit the last failed status update ($A_t= \retx$). If an ACK is received at the transmitter, we can restrict the action space to $\{\idle,\new\}$ as retransmitting an already decoded status update is strictly suboptimal. Also note that, even though the transmitter can just sense and generate a new update but not transmit it, this would be suboptimal, so we do not consider such an action separately.


We consider the HARQ protocol: that is, the received signals from previous transmission attempts for the same packet are combined for decoding. The probability of error using $r$ retransmissions, denoted by $g(r)$, depends on $r$ and the particular HARQ scheme used for combining multiple transmission attempts (an empirical method to estimate $g(r)$ is presented in \cite{harq2003}). As in any reasonable HARQ strategy, we assume that $g(r)$ is non-zero for any $r$ and is non-increasing in the number of retransmissions $r$; that is, $g(r_1) \geq g(r_2) > 0 $ for all $r_1 \leq r_2$.  We also assume that the transmissions are successful with a positive probability $g(r)< 1$ for all $r$. Standard HARQ methods only combine information from a finite maximum number of retransmissions \cite{IEEEstandard}. Accordingly, we consider a truncated retransmission count of a status update, denoted by $R_t$ for the status update transmitted at time $t$, where $R_t \in \{0,\ldots,R_{max}\}$; that is, the receiver can combine information from the last $R_{max}$ retransmissions at most. We set $R_0=0$ so that there is no previously transmitted packet at the transmitter at time $t=0$.


At the end of each time slot $t$, a random amount of energy is harvested and stored in a rechargeable battery at the transmitter, denoted by $E_t \in\mathcal{E} \triangleq \{0,1,\ldots,E_{max}\}$, following  a first-order discrete-time Markov model, characterized by the stationary transition  probabilities $p_E(e_1|e_2)$, defined as $p_E(e_1|e_2)\triangleq Pr(E_{t+1}=e_2|E_t=e_1)$, $\forall t$ and $\forall e_1,e_2 \in \mathcal{E}$. It is also assumed that $p_E(0|e)>0$, $\forall e \in \mathcal{E}$. Harvested energy is first stored in a rechargeable battery with a limited capacity of $B_{\mathrm{max}}$ energy units. The energy consumption for status sensing is denoted by $E^{\mathrm{s}}\in \mathbb{Z}^{+}$, while the energy consumption for a transmission attempt is denoted by $E^{\mathrm{tx}} \in \mathbb{Z}^{+}$. 





The battery state at the beginning of time slot $t$, denoted by $B_t$, can be written as follows:
\begin{align} \label{eq:causality1}
B_{t+1}=\min(B_t+E_t-(E^{\mathrm{s}}+E^{\mathrm{tx}})\mathbbm{1}[A_t=\sense]\nonumber\\-E^{\mathrm{tx}}\mathbbm{1}[A_t=\retx],B_{\mathrm{max}}), 
\end{align}
and the energy causality constraints are given as: 
\begin{equation}
(E^{\mathrm{s}}+E^{\mathrm{tx}})\mathbbm{1}[A_t=\sense]+E^{\mathrm{tx}}\mathbbm{1}[A_t=\retx] \leq B_t,\label{eq:causality2}
\end{equation}
where the indicator function $\mathbbm{1}[C]$ is equal to $1$ if event $C$ holds, and zero otherwise. \eqref{eq:causality1} implies that the battery overflows if energy is harvested when the battery is full, while \eqref{eq:causality2} imposes that the energy consumed by sensing or transmission operations at time slot $t$ is limited by the energy $B_{t}$ available in the battery at the beginning of that time slot. We set $B_0 = 0$ so that the battery is empty at time $t = 0$.


Let $\Delta^{tx}_t$ denote the number of time slots elapsed since the generation of the most recently sensed status update at the transmitter side, while $\Delta^{rx}_t$ denote the AoI of the most recently received status update at the receiver side. $\Delta^{tx}_t$ resets to $1$ if a new status update is generated at time slot $t-1$, and increases by one (up to $\Delta_{max}$) otherwise, i.e., 
\begin{align*}
\Delta^{tx}_{t+1}=
\begin{cases}
1 &\textrm{ if }  A_t=\sense; \\
\min(\Delta^{tx}_t+1,\Delta_{max}) &\textrm{ otherwise. }
\end{cases}
\end{align*}
On the other hand, the AoI at the receiver side evolves as follows:
\begin{align*}\Delta^{rx}_{t+1}\!=\!
\begin{cases}
\min(\!\Delta^{rx}_t\!+\!1,\!\Delta_{max}\!) &\textrm{if } A_t\!=\!\idle \textrm{ or } K_t\!=\!0; \\
1 &\textrm{if } A_t\!=\!\sense \textrm{ and } K_t\!=\!1; \\
\min(\!\Delta^{tx}_t\!+\!1,\!\Delta_{max}\!) &\textrm{if } A_t\!=\!\retx \textrm{ and } K_t\!=\!1.
\end{cases}
\end{align*}
Note that once the AoI at the receiver is at least as large as at the transmitter, this relationship holds forever; thus it is enough to consider cases when $\Delta^{rx}_t \ge \Delta^{tx}_t$. 

To determine the success probability of a transmission, we need to keep track of the number of current retransmissions. The number of retransmissions is zero for a newly sensed and generated status update and increases up to $R_{max}$ as we keep retransmitting the same packet. It is easy to see that retransmitting when $\Delta^{tx}_{t+1}=\Delta_{max}$ is suboptimal, therefore we explicitly exclude this action by setting the retransmission count to 0 in this case. Also, it is suboptimal to generate a new update and retransmit the old one, thus whenever a new status update is generated, the previous update at the transmitter is dropped and cannot be retransmitted. Thus, the evolution of the retransmission count is given as follows:  
\begin{align*}
R_{t+1}\!=\!
\begin{cases}
0 &\textrm{ if }  K_t=1 \\
& \textrm{ or } \Delta^{tx}_{t+1}=\Delta_{max}; \\
1 &\textrm{ if } A_t\!=\!\sense \textrm{ and } K_t\!=\!0; \\
R_t &\textrm{ if }  A_t\!=\!\idle \\
& \textrm{ and } \Delta^{tx}_{t+1}\neq\Delta_{max}; \\
\min(R_t\!+\!1\!,\!R_{max}) &\textrm{ if } A_t\!=\!\retx, K_t\!=\!0 \\
&  \textrm{ and } \Delta^{tx}_{t+1}\neq\Delta_{max}.
\end{cases}
\end{align*}

The state of the system is formed by five components $S_t=(E_{t},B_t,\Delta^{rx}_t,\Delta^{tx}_t,R_t)$. In each time slot, the transmitter knows the state, and takes action from the set $\mathcal{A}=\{\idle,\new,\retx\}$. The goal is to find a policy $\pi$ which minimizes the expected average AoI at the receiver over an infinite time horizon, which is given by:
\begin{problem}
\begin{align}
J^*\triangleq \min_{\pi}\lim_{T\rightarrow \infty }\frac{1}{T+1}\Exp{\sum_{t=0}^T{\Delta^{rx}_t}} \label{eq:problem}\\
\textrm{subject to } \eqref{eq:causality1} \textrm{ and } \eqref{eq:causality2} \nonumber.
\end{align}
\label{problem_EH}
\end{problem}

In \cite{journal_paper}, we have considered status updates with HARQ under an average power constraint. In that case, it is possible to show that it is suboptimal to retransmit a failed update after an idle period. Restricting the actions of the transmitter accordingly, the AoI at the receiver after a successful transmission event is equal to the number of retransmissions of the corresponding update. Therefore in addition to the AoI at the receiver, we only need to track the retransmission count. However, in the current scenario, retransmissions of a status update can be interrupted due to energy outages, which means that we also need to keep track of the AoI at the transmitter side (hence we need to have both $\Delta^{rx}_t$ and $\Delta^{tx}_t$ in the state of the system).

\section{MDP Formulation}
\label{sec:dynamic}

It is easy to see that Problem~\ref{problem_EH} can be formulated as an average-cost finite-state MDP: An MDP is defined by the quadruple  $\big(\mathcal{S}, \mathcal{A}, P, c\big)$ \cite{Puterman_book}: The finite set of states $\mathcal{S}$ is defined as $\mathcal{S}=\{s=(e,b,\delta^{rx},\delta^{tx},r) : e \in \mathcal{E}, b \in \{0,\ldots,B_{\mathrm{max}}\}, \delta^{rx},\delta^{tx} \in  \{1,\ldots,\Delta_{max}\}, r \in \{0,\ldots,R_{max}\}, \Delta^{rx} \ge \Delta^{tx}\}$,  while the finite set of actions $\mathcal{A}=\{\idle,\sense,\retx\}$ is already defined. Note that action $\retx$ cannot be taken in states with retransmission count $r=0$. $P$ refers to the transition probabilities, where  $P(s'|s,a) = \Pr(S_{t+1}=s' \mid S_t = s, A_t=a)$ is the probability that action $a$  in state $s$  at time $t$  will lead to state $s'$ at time $t+1$, which is characterized by the EH statistics and channel error probabilities. The cost function $c: \mathcal{S} \times \mathcal{A} \rightarrow \mathbbm{Z}$, is the AoI at the receiver, and is defined as $c(s,a)=\delta^{rx}$ for any $s\in \mathcal{S}$, $a \in \mathcal{A}$, independent of the action taken, where $\delta^{rx}$ is the component of $s$ describing the AoI at the receiver.

To solve Problem~\ref{problem_EH}, we need to find a policy for the transmitter, determining its actions for every state $s \in \mathcal{S}$, which can minimize the average AoI at the receiver.





It is easy to see that MDP formulated for Problem~\ref{problem_EH} is a communicating MDP by Proposition 8.3.1 of~\cite{Puterman_book}\footnote{By Proposition 8.3.1 of~\cite{Puterman_book}, MDP is communicating since there exists a stationary policy which induces a recurrent Markov chain, e.g., a state $(0,B_0,\Delta_{max},\Delta_{max},R_0)$ is reachable from all other states considering a policy which never transmits and in a scenario where no energy is harvested.}, i.e., for every pair of $(s,s')\in\mathcal{S}$, there exists a deterministic policy under which $s'$ is accessible from $s$. By Theorem 8.3.2 of~\cite{Puterman_book}, an optimal stationary policy exists with constant gain. In particular, there exists a function $h: \mathcal{S} \to \mathbb{R}$, called the \textit{differential cost function} for all $s=(e,b,\delta^{rx},\delta^{tx},r) \in \mathcal{S}$, satisfying the following \emph{Bellman optimality equations} for the average-cost finite-state finite-action MDP \cite{Puterman_book}:
\begin{align}
\label{eq:Bellman_EH}
h(s)+J^{*}&=\min_{a\in\{\idle,\sense,\retx\}}\big(\delta^{rx}+\Exp{h(s')|a}\big),
\end{align}
where $s'\triangleq (e',b',{\delta^{rx}}',{\delta^{tx}}',r')$ is the next state obtained from $(e,b,\delta^{rx},\delta^{tx},r)$ after taking action $a$, and $J^*$ represents the optimal achievable average AoI under policy $\pi^*$. Note that the function $h$ satisfying \eqref{eq:Bellman_EH} is unique up to an additive factor, and with selecting this additive factor properly, it also satisfies
\begin{align*}
 h(s)  = \Exp{\sum_{t=0}^\infty (\Delta^{rx}_t - J^{*})\big| S_0=s}. \end{align*}
We also introduce the \textit{state-action cost function}: 
\begin{align}
\lefteqn{Q((e,b,\delta^{rx},\delta^{tx},r),a)} \nonumber \\
& \triangleq \delta^{rx}+\Exp{h(e',b',{\delta^{rx}}',{\delta^{tx}}',r')|a}.
\label{eq:Bellman2_EH}
\end{align}
Then an optimal policy, for any $(e,b,\delta^{rx},\delta^{tx},r) \in \mathcal{S}$, takes the action achieving the minimum in \eqref{eq:Bellman2_EH}:
\begin{align}
\label{eq:opt_eta_EH}
\pi^*(e,b,\delta^{rx},\delta^{tx},r) &\in \argmin_{a\in\{\idle,\sense,\retx\}} \big(Q((e,b,\delta^{rx}\!,\!\delta^{tx}\!,\!r),a)\big). 
\end{align}

An optimal policy solving \eqref{eq:Bellman_EH}, \eqref{eq:Bellman2_EH} and \eqref{eq:opt_eta_EH} defined above can be found by relative value iteration (RVI) for finite-state finite-action average-cost MDPs from Sections 8.5.5 and 9.5.3 of \cite{Puterman_book}: Starting with an arbitrary initialization of $h_0(s)$, $\forall s \in \mathcal{S}$, and setting an arbitrary but fixed reference state $s^{ref}\triangleq (e^{ref},b^{ref},{\delta^{rx}}^{ref},{\delta^{tx}}^{ref},r^{ref})$,  a single iteration of the RVI algorithm $\forall (s,a) \in \mathcal{S}\times \mathcal{A}$ is given as follows:     \begin{align}
Q_{n+1}(s,a) &\leftarrow \Delta^{rx}_n+	\Exppi{h_n(s')},\\
V_{n+1}(s) &\leftarrow \min_{a}(Q_{n+1}(s,a)),\\
h_{n+1}(s) &\leftarrow {V}_{n+1}(s) -{V}_{n+1}(s^{ref}),
\end{align} 
where $Q_n(s,a)$, $V_n(s)$ and $h_n(s)$ denote the state-action value function, value function and differential value function at iteration $n$, respectively. By Theorem 8.5.7 and Section 8.5.5 of \cite{Puterman_book},  $h_n$ converges to $h$, and $\pi_{n}^*(s) \triangleq \argmin_{a} Q_{n}(s,a)$ converges to $\pi^*(s)$.

\subsection{Structure of the Optimal Policy}
\label{sec:structure_EH}

Next, we present the structure of the optimal policy and show that the solution to the Problem~\ref{problem_EH} is of threshold-type.
\begin{theorem}
There exits an optimal stationary policy  $\pi^*(s)$ that is monotone with respect to $\Delta_t^{rx}$, that is, $\pi^*(s)$ is of threshold-type.
\label{thm_monotonicty}
\end{theorem}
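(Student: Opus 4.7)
The plan is to establish the theorem in two main steps. First, I would prove that the differential cost function $h(\cdot)$ is non-decreasing in its $\delta^{rx}$ argument (with the other four components of the state held fixed). Second, using this monotonicity, I would show that both cost differences $Q(s,\idle)-Q(s,\new)$ and $Q(s,\idle)-Q(s,\retx)$ are non-decreasing in $\delta^{rx}$, which forces any optimal policy to switch from idling to transmitting at some threshold value of the receiver AoI.

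For the first step, I would argue by induction on the RVI iteration, initialized at $h_0\equiv 0$ (monotone trivially). Suppose $h_n$ is non-decreasing in its $\delta^{rx}$ coordinate, and fix two states $s=(e,b,\delta^{rx},\delta^{tx},r)$ and $\tilde s=(e,b,\tilde\delta^{rx},\delta^{tx},r)$ with $\delta^{rx}\le\tilde\delta^{rx}$. For any common action $a\in\{\idle,\new,\retx\}$, inspection of the recursions for $\Delta^{tx}_{t+1}$, $R_{t+1}$, $B_{t+1}$, and $\Delta^{rx}_{t+1}$ reveals that every next-state component other than $\delta^{rx}$ is determined solely by $a$ and components of the state that $s$ and $\tilde s$ share, while the next $\delta^{rx}$ equals $\min(\delta^{rx}+1,\Delta_{max})$ under $\idle$ or a failed transmission, equals $1$ under a successful $\new$, and equals $\min(\delta^{tx}+1,\Delta_{max})$ under a successful $\retx$. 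The latter two values are identical at $s$ and $\tilde s$, and the first is monotone in $\delta^{rx}$. Combining this with the inductive hypothesis and the immediate cost $c(s,a)=\delta^{rx}$ being monotone, we get $Q_{n+1}(s,a)\le Q_{n+1}(\tilde s,a)$ for each $a$; taking the minimum and subtracting the fixed reference value $V_{n+1}(s^{ref})$ preserves monotonicity, so $h_{n+1}$ inherits the property. Passing $n\to\infty$ (justified by Theorem 8.5.7 of \cite{Puterman_book}, already invoked above) transfers the property to $h$.

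For the second step, I would expand $Q(s,\idle)-Q(s,\new)$ from \eqref{eq:Bellman2_EH}. The $\idle$-next-state has $\delta^{rx}$-coordinate $\min(\delta^{rx}+1,\Delta_{max})$; the $\new$-failure-next-state has the same $\delta^{rx}$-coordinate (and otherwise depends only on $a$ through $\delta^{tx}\!\leftarrow\!1$, $r\!\leftarrow\!1$, and the battery debit $E^{\mathrm{s}}+E^{\mathrm{tx}}$); the $\new$-success-next-state has $\delta^{rx}$-coordinate $1$, independent of the current $\delta^{rx}$. Increasing $\delta^{rx}$ by one therefore weakly increases the $\idle$ term and the $\new$-failure term (by the monotonicity of $h$ established above) while leaving the $\new$-success term and the immediate-cost terms unchanged in the difference. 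Hence $Q(s,\idle)-Q(s,\new)$ is non-decreasing in $\delta^{rx}$; an identical argument applies to $Q(s,\idle)-Q(s,\retx)$ whenever $\retx$ is feasible. Consequently, for each fixed $(e,b,\delta^{tx},r)$, there is a threshold value $\tau(e,b,\delta^{tx},r)\in\{1,\ldots,\Delta_{max}+1\}$ such that for $\delta^{rx}<\tau$ idling is optimal and for $\delta^{rx}\ge\tau$ a transmission action attains the minimum in \eqref{eq:opt_eta_EH}. Selecting such an action at each state yields a stationary deterministic optimal policy of threshold type.

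The main obstacle I anticipate is the bookkeeping across the three actions: I must check that the $\min(\cdot,\Delta_{max})$ truncation, the feasibility constraints from \eqref{eq:causality2}, and the fact that $\retx$ is excluded when $r=0$ or $\Delta^{tx}_{t+1}=\Delta_{max}$ do not break the pairwise comparison. In each such excluded case the feasible action set is the same at $s$ and $\tilde s$ (since $r$, $\delta^{tx}$, $b$, and $e$ are held fixed when varying $\delta^{rx}$), so the minimization is over the same set at both states and the monotonicity argument carries through without modification.
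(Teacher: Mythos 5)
Your Step 1 (monotonicity of $h$ in $\delta^{rx}$ via value-iteration induction) is plausible and is a standard ingredient, but Step 2 contains a genuine non sequitur that is exactly the point the paper's proof is devoted to. Writing out the difference from \eqref{eq:Bellman2_EH},
\begin{align*}
Q(s,\idle)-Q(s,\new)=\sum_{e'}p_E(e,e')\Big[&h(e',b+e,\delta^{rx}+1,\delta^{tx}+1,r)\\
&-g(0)\,h(e',b+e-E^{\mathrm{s}}-E^{\mathrm{tx}},\delta^{rx}+1,1,1)\Big]+\text{const},
\end{align*}
you correctly observe that both $h$-terms are non-decreasing in $\delta^{rx}$, but the second enters with a \emph{negative} sign, so the difference of two non-decreasing functions need not be non-decreasing. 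Monotonicity of $h$ tells you nothing about which increment is larger; what is actually needed is a comparison of the increments, namely that the increase of $h$ in $\delta^{rx}$ at battery level $b+e$ with coordinates $(\delta^{tx}+1,r)$ dominates $g(0)$ times the increase of $h$ at battery level $b+e-E^{\mathrm{s}}-E^{\mathrm{tx}}$ with coordinates $(1,1)$. This is precisely the submodularity condition \eqref{eq:submodular2} (and its analogues \eqref{eq:submodular3}, \eqref{eq:submodular22} for the pairs $(\idle,\retx)$ and $(\new,\retx)$), and it cannot be read off from monotonicity alone because the two increments are evaluated at different battery levels and different $(\delta^{tx},r)$ coordinates.

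The paper closes this gap with a separate backward induction on $\delta^{rx}$ starting from $\delta^{rx}=\Delta_{max}$ (where the truncation makes the inequalities hold with equality), and at each step performs a case analysis over the optimal actions $a_2^{*},a_3^{*}$ in the perturbed states, using $g(r)\le 1$ and the optimality of those actions to propagate the inequality. Your outline skips this entirely. To repair your proof you would either have to carry out this increment-comparison induction (essentially reproducing the paper's argument), or replace Step 1 by a stronger inductive invariant — e.g., that $h$ satisfies the submodular inequalities \eqref{eq:submodular2}--\eqref{eq:submodular22} at every RVI iterate — rather than mere monotonicity. A secondary, smaller omission: your threshold conclusion also needs the comparison between $\new$ and $\retx$ above the idle threshold (the pair $(\new,\retx)$), which you wave at with ``an identical argument'' but which requires its own case analysis in the paper.
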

\begin{proof}
The proof is given in  Appendix~\ref{AppendixK}.
\end{proof}

Following Theorem~\ref{thm_monotonicty}, we present a threshold-based policy which will be termed as a \textit{double-threshold policy} in the rest of this paper.  
\begin{align}
A_t= \begin{cases}
\idle &\textrm{ if }  \Delta^{rx}_t < \mathcal{T_{\new}}(e,b,\delta^{tx},r), \\
\new  &\textrm{ if } \mathcal{T_{\new}}(e,b,\delta^{tx},r) \leq \Delta^{rx}_t < \mathcal{T_{\retx}}(e,b,\delta^{tx},r), \\
\retx  &\textrm{ if }  \Delta^{rx}_t \geq \mathcal{T_{\retx}}(e,b,\delta^{tx},r),
\end{cases} \label{eq:double_threshold}
\end{align} 
for some threshold values denoted by  $\mathcal{T_{\new}}(e,b,\delta^{tx},r)$ and $\mathcal{T_{\retx}}(e,b,\delta^{tx},r)$, where $\mathcal{T_{\new}}(e,b,\delta^{tx},r)\leq  \mathcal{T_{\retx}}(e,b,\delta^{tx},r)$. 

We can simplify the problem by making an assumption on the policy space in order to obtain a simpler \textit{single-threshold policy}, which will result in a more efficient learning algorithm: We assume that a packet is retransmitted until it is successfully decoded, provided that there is enough energy in the battery, that is, the transmitter is not allowed to preempt an undecoded packet and transmit a new one. 


The solution to the simplified problem is also of threshold-type,  that is, 
\begin{align}
A_t= \begin{cases}
\idle &\textrm{ if }  \Delta^{rx}_t < \mathcal{T}(e,b,\delta^{tx},r), \\
\new  &\textrm{ if }  \Delta^{rx}_t \geq \mathcal{T}(e,b,\delta^{tx},r), \textrm{ and } r=0 \\
\retx  &\textrm{ if }  \Delta^{rx}_t \geq \mathcal{T}(e,b,\delta^{tx},r) \textrm{ and } r\neq 0,
\end{cases}
\end{align} 
for some $\mathcal{T}(e,b,\delta^{tx},r)$.  

In Section \ref{sec:policy_gradient}, we present a RL algorithm to find the threshold values defined in this section.

\section{RL Approach}
\label{sec:learning_EH}

In some scenarios, it can be assumed that the channel and energy arrival statistics remain the same or change very slowly and the same  environment is experienced for a sufficiently long time before the time of deployment. Accordingly,  we can assume that the statistics regarding the error probabilities and energy arrivals are available before the time of transmission. In such scenarios, RVI algorithm presented in Section~\ref{sec:dynamic} can be used. However, in most practical scenarios, channel error probabilities for retransmissions and the EH characteristics are not known at the time of deployment, or may change over time. In this section, we assume that the transmitter  does not know the system characteristics \textit{a-priori}, and has to learn them. In our previous works \cite{wcnc_paper,journal_paper,pimrc_paper}, we have employed learning algorithms for constrained problems with countably infinite state spaces such as average-cost SARSA. While these algorithms can be adopted to the current framework by considering an average transmission constraint of 1, they do not have convergence guarantees. However, Problem~\ref{problem_EH} constitutes an unconstrained MDP with finite state and action spaces, and there exist RL algorithms for unconstrained MDPs which enjoy convergence guarantees. Moreover, we have shown the optimality of a threshold type policy for Problem~\ref{problem_EH}, and RL algorithms which exploit this structure can be considered.  Thus, we employ three different RL algorithms, and compare their performances in terms of the average AoI as well as the convergence speed. First, we employ a value-based RL algorithm, namely GR-learning, which converges to an optimal policy. Next,  we consider a structured policy search algorithm, namely FDPG, which does not necessarily find the optimal policy but performs very well in practice, as demonstrated through simulations in Section~\ref{sec:results_EH}. We also note that GR-learning learns from a single trajectory generated during learning steps while FDPG uses Monte-Carlo roll-outs for each policy update. Thus, GR-learning is more amendable for applications in real-time systems. Finally, we employ the DQN algorithm, which implements a non-linear neural network estimator in order to learn the optimal status update policy. 
 

\subsection{GR-Learning with Softmax}
\label{sec:gr}

The literature for average-cost RL is quite limited compared to discounted cost problems  \cite{Sutton1998,Mahadevan1996}.  For the average AoI minimization problem in \eqref{eq:problem}, we employ a modified version of the  \textit{GR-learning} algorithm proposed in \cite{Gosavi2004}, as outlined in Algorithm~\ref{algo_learn}, with \emph{Boltzmann}  (\emph{softmax}) exploration.  The resulting algorithm is called \emph{GR-learning with softmax}.

Notice that, by only knowing $Q(s,a)$, one can find the optimal policy $\pi^*$ using \eqref{eq:opt_eta_EH} without knowing the transition probabilities $P$, characterized by $g(r)$ and $p_E$. Thus, \emph{GR-learning with softmax} starts with an initial estimate of $Q_{0}(s,a)$ and finds the optimal policy by estimating state-action values in a recursive manner. In the $n^{th}$ iteration, after taking action $A_n$, the transmitter observes the next state $S_{n+1}$ and the instantaneous cost value $\Delta^{rx}_n$. Based on these, the estimate of $Q_{n+1}(s,a)$ is updated by a weighted average of the previous estimate $Q_{n}(s,a)$ and the estimated expected value of the current policy in the next state $S_{n+1}$. Moreover, we update the gain $J_n$ at every time slot based on the empirical average of AoI. 

In each time slot, the learning algorithm 
\begin{itemize}
\item observes the current state $S_n \in \mathcal{S}$,
\item selects and performs an action $A_n \in \mathcal{A}$,
\item observes the next state $S_{n+1}\in \mathcal{S}$ and the instantaneous cost $\Delta^{rx}_n$,
\item updates its estimate of $Q(S_n,A_n)$ using the current estimate of $J_{n}$ by
\begin{align}
Q_{n+1}(S_n,A_n)\leftarrow Q_{n}(S_n,A_n) +  \alpha(m(S_n,A_n,n)) \nonumber \\ [\Delta^{rx}_n -J_{n}+Q_{n}(S_{n+1},A_{n+1})-Q_{n}(S_n,A_n)],
\end{align}
where $\alpha(m(S_n,A_n,n))$ is the update parameter (learning rate) in the $n^{th}$ iteration, and depends on the function $m(S_n,A_n,n)$, which is the number of times the state–action pair $(S_n, A_n)$ has been visited till the $n^{th}$ iteration.
\item updates its estimate of $J_{n}$ based on the empirical average as follows:
\begin{align}
  J_{n+1}\leftarrow J_n+ \beta(n) \left[\frac{n J_n+\Delta^{rx}_n}{n+1}-J_n\right]  \end{align}
 where $\beta(n)$ is the update parameter in the $n^{th}$ iteration. 
  \end{itemize}

The transmitter action selection method should balance the \textit{exploration} of new actions with the \textit{exploitation} of actions known to perform well. We use the \textit{Boltzmann} (\textit{softmax}) action selection method, which chooses each action randomly relative to its expected cost as follows:
\begin{equation}
    \pi(a|S_n)=\frac{\displaystyle\exp(-Q(S_n,a)/\tau_n)}{\displaystyle\sum_{a'\in\mathcal{A}}{\exp(-Q(S_n,a')/\tau_n)}} \label{eq:softmax}.
\end{equation}
Parameter $\tau_n$ in \eqref{eq:softmax} is called the temperature parameter and decays exponentially with decay parameter $\gamma_{\tau}\leq 1$ at each iteration. High $\tau_n$ corresponds to more uniform action selection (exploration) whereas low $\tau_n$ is biased toward the best action (exploitation). According to Theorem~2 of \cite{Gosavi2004}, if $\alpha$, $\beta$ satisfy $\sum_{m=1}^{\infty}\alpha(m), \sum_{m=1}^{\infty}\beta(m)\rightarrow \infty$,   $\sum_{m=1}^{\infty}\alpha^2(m), \sum_{m=1}^{\infty}\beta^2(m) <\infty$, $
\lim_{x \to \infty} \frac{\beta(m)}{\alpha(m)}\rightarrow 0$, \textit{GR}-Learning converges to an optimal policy. 

\begin{algorithm}
\caption{GR-learning with softmax}
\begin{small}
\begin{algorithmic}[1]
 \renewcommand{\algorithmicrequire}{\textbf{Input:}}
 \renewcommand{\algorithmicensure}{\textbf{Output:}}
 \REQUIRE error probabilities $g(r)$ and harvesting probabilities $p_E$ are unknown
    \STATE $\tau_0\leftarrow 1$,     \Comment{temperature parameter}  
    \STATE $\gamma \leftarrow 0.95$,   \Comment{softmax decay coefficient}.
    \STATE $Q_0 \leftarrow 0$, $\forall_{(s,a)}$   \Comment{initialization of $Q$} 
    \STATE $J_0 \leftarrow 0$,   \Comment{initialization of the gain} 
\FOR{$n=\{1,2,\ldots\}$}   
    \STATE \textsc{Observe} the current state $S_n$

    \FOR{$a \in \mathcal{A}$}   
   
   $\pi(a|S_n)=\frac{\displaystyle\exp(-Q(S_n,a)/\tau_n)}{\displaystyle\sum_{a'\in\mathcal{A}}{\exp(-Q(S_n,a')/\tau_n)}}$ 
   \ENDFOR
   
    \STATE \textsc{Sample} and \textsc{Perform} $A_n$ from $\pi(a|S_n)$ 
    
    \STATE \textsc{observe} the next state $S_{n+1}$ and cost $\Delta_t$ 
    
    \STATE \textsc{Update}  the estimates of $Q(S_n,A_n)$ and $J_{n}$ by
    \begin{align*}
    Q(S_n,A_n)&\leftarrow Q(S_n,A_n) + \alpha(m(S_n,A_n,n)) [\Delta_t-J_n\\&~~+\min_{A_{n+1}}{Q(S_{n+1},A_{n+1}})-Q(S_n,A_n)]
   \\
    J_{n+1}&\leftarrow J_n+ \beta(n) [\frac{n J_n+\Delta_t}{n+1}-J_n]  
    \end{align*} 
    
    \STATE \textsc{Update} step size parameters
    \begin{align*}
     \tau_{n+1} &\leftarrow \gamma \tau_n \\
     \alpha(n)&\leftarrow 1/\sqrt{m(S_n,A_n,n)}   \\
    \beta(n)&\leftarrow 1/n \\
    m(S_n,A_n,n+1) &\leftarrow  m(S_n,A_n,n)+1\\
    m(s,a,n+1) &\leftarrow  m(s,a,n), \forall (s,a)\neq (S_n,A_n)
    \end{align*}
    
 \ENDFOR

\end{algorithmic}
\end{small}
\label{algo_learn}
\end{algorithm}




\subsection{Finite-Difference Policy Gradient (FDPG)}
\label{sec:policy_gradient}

GR-learning in Section~\ref{sec:gr} is a value-based RL method, which learns the state-action value function for each state-action pair. In practice, $\Delta_{max}$ can be large, which might slow down the convergence of GR-learning due to a prohibitively large state space. 


In this section, we are going to propose a learning algorithm which exploits the structure of the optimal policy exposed in Theorem~\ref{thm_monotonicty}. A monotone policy is shown to be average optimal in the previous section; however, it is not possible to compute the threshold values directly for Problem~\ref{problem_EH}. 


Note that, $A_t=\idle$ if $B_t<E^{\mathrm{tx}}$  ($B_t<E^{\mathrm{tx}}+E^{\mathrm{s}}$) for $r\geq 1$ ($r=0$); that is,  $\mathcal{T}(e,b,\delta^{tx},r)=\Delta_{max}+1$ if $b<E^{\mathrm{tx}}$ for $r\geq 1$ and $b<E^{\mathrm{tx}}+E^{\mathrm{s}}$ for $r=0$. This ensures that energy causality constraints in \eqref{eq:causality2} hold. Other thresholds will be determined using policy gradient. 

In order to employ the policy gradient method, we approximate the policy by a parameterized smooth function with parameters $\theta(e,b,\delta^{tx},r)$, and convert the discrete policy search problem into estimating the optimal values of these continuous parameters, which can be numerically solved by  stochastic approximation algorithms \cite{Spall2003}.  Continuous stochastic approximation is much more efficient than discrete search algorithms in general.

In particular, with a slight abuse of notation, we let $\pi_{\theta}(e,b,\delta^{rx},\delta^{tx},r)$ denote the probability of taking action $A_t=\sense$ ($A_t=\retx$) if $r=0$ ($r\neq 0$), and  consider the parameterized sigmoid function:
\begin{align}
\pi_{\theta}(e,b,\delta^{rx},\delta^{tx},r)\triangleq \frac{1}{1+e^{-\frac{\delta^{rx}-\theta(e,b,\delta^{tx},r)}{\tau}}}.
\end{align}
We note that $\pi_{\theta}(e,b,\delta^{rx},\delta^{tx},r) \rightarrow \{0,1\}$ and $\theta(e,b,\delta^{tx},r)\rightarrow \mathcal{T}(e,b,\delta^{tx},r)$ as $\tau \rightarrow 0$. Therefore, in order to converge to a deterministic policy $\pi$, $\tau>0$ can be taken as a sufficiently small constant, or can be decreased gradually to zero.  The total number of parameters to be estimated is $|\mathcal{E}| \times B_{\mathrm{max}}\times \Delta_{max}\times R_{max}+1$ minus the parameters corresponding to  $b<E^{\mathrm{tx}}$  ($b<E^{\mathrm{tx}}+E^{\mathrm{s}}$) for $r>0$ ($r=0$) due to energy causality constraints as stated previously. 

With a slight abuse of notation, we map the parameters $\theta(e,b,\delta^{tx},r)$ to a vector $\overline{\theta}$ of size $d\triangleq |\mathcal{E}| \times B_{\mathrm{max}} \times \Delta_{max}\times R_{max}+1$.  Starting with some initial estimates of $\overline{\theta}_0$, the parameters can be updated in each iteration $n$ using the gradients as follows: 
\begin{align}
   \overline{\theta}_{n+1}=\overline{\theta}_n-\gamma(n) ~{\partial J}/{\partial \overline{\theta}_n},
\end{align}
where the step size parameter $\gamma(n)$ is a positive decreasing sequence and satisfies the first two convergence properties given at the end of Section~\ref{sec:gr} from the theory of stochastic approximation \cite{stochastic_approx}.

Computing the gradient of the average AoI directly is not possible; however, several methods exist in the literature to estimate the gradient \cite{Spall2003}. In particular, we employ the FDPG  \cite{policy_gradient} method. In this method, the gradient is estimated by estimating $J$ at slightly perturbed parameter values. First, a random perturbation vector $D_n$ of size $d$ is generated according to a predefined probability distribution, e.g., each component of $D_n$ is an independent Bernoulli random variable with parameter $q\in (0,1)$. The thresholds are perturbed with a small amount $\sigma>0$ in the directions defined by $D_n$ to obtain $\overline{\theta}_n^{\pm}(e,b,\delta^{tx},r)\triangleq \overline{\theta}_n(e,b,\delta^{tx},r)\pm \sigma D_n$. Then, empirical estimates $\hat{J}^{\pm}$ of the average AoI corresponding to the perturbed parameters $\overline{\theta}_n^{\pm}$, obtained from Monte-Carlo rollouts, are used to estimate the gradient:
\begin{align}
  {\partial J}/{\partial \overline{\theta}_n} \approx (D_n^{\intercal} D_n)^{-1} D_n^{\intercal} \frac{(\hat{J}^+-\hat{J}^-)}{2 \sigma}, 
\end{align}
where $D_n^{\intercal}$ denotes the transpose of vector $D_n$. The pseudo code of the finite difference policy gradient algorithm is given in Algorithm~\ref{algo_learn2}.

\begin{algorithm}
\caption{FDPG}
\begin{small}
\begin{algorithmic}[1]
 \renewcommand{\algorithmicrequire}{\textbf{Input:}}
 \renewcommand{\algorithmicensure}{\textbf{Output:}}
 \REQUIRE error probabilities $g(r)$ and harvesting probabilities $p_E$ are unknown
    \STATE $\tau_0\leftarrow 0.3$,     \Comment{temperature parameter}  
    \STATE $\zeta\leftarrow 0.99$,   \Comment{decaying coefficient for $\tau$}.
    \STATE $\overline{\theta}_0 \leftarrow 0$   \Comment{initialization of $\overline{\theta}$} 
\FOR{$n=\{1,2,\ldots\}$}   
    \STATE \textsc{Generate} random perturbation vector
    
    $D_n = binomial(\{0,1\}, q=0.5, d)$ 
    
        \STATE \textsc{Perturb} parameters $\overline{\theta}_n$
    
    $\overline{\theta}_n^+=\overline{\theta}_n+\beta D_n$,
    $\overline{\theta}_n^-=\overline{\theta}_n-\beta D_n$
    
    
        \STATE \textsc{Estimate} $\hat{J}_n^{\pm}$ from Monte-Carlo rollouts using policies $\pi_{\theta_n^{\pm}}$:
    

    \FOR{$t \in \{1,\ldots, T\}$ }  
        \STATE \textsc{observe} current state $S_t$ and
         \textsc{use}  policy $\pi_{\theta_n^{\pm}}$
     \ENDFOR
    
        \STATE \textsc{Estimate} $\hat{J}_n^{\pm}$ from Monte-Carlo rollouts using policy $\pi_{\theta_n^{\pm}}$
    
    $\hat{J}_n^{\pm}=\frac{1}{T}\sum_{t=1}^T{\Delta^{rx}_t}$
    
        \STATE \textsc{Compute} the estimate of the gradient ${\partial J}/{\partial \overline{\theta}_n}$
    
    
        \STATE \textsc{Update} 
    
    $\overline{\theta}_{n+1}=\overline{\theta}_n-\gamma(n) ~{\partial J}/{\partial \overline{\theta}_n}$
    
    $\tau_{n+1}\leftarrow  \zeta  \tau_n$ ~\Comment{decrease $\tau$}
    
    
 \ENDFOR

\end{algorithmic}
\end{small}
\label{algo_learn2}
\end{algorithm}


Similar steps can be followed to find the thresholds for the double-threshold policy where $\mathcal{T}(e,b,\delta^{tx},r)$ and $\theta(e,b,\delta^{tx},r)$ are replaced by $\mathcal{T_{\new}}(e,b,\delta^{tx},r)$, $\mathcal{T_{\retx}}(e,b,\delta^{tx},r)$ and $\theta_{\new}(e,b,\delta^{tx},r)$, $\theta_{\retx}(e,b,\delta^{tx},r)$ respectively.

\subsection{Deep Q-Network (DQN)}

A DQN uses a multi-layered neural network in order to estimate the values $Q(s,a)$ of the underlying MDP; that is, for a given state $s$, DQN outputs a vector of state-action values, $Q_{\theta}(s, a)$, where $\theta$ denotes the parameters of the network. The neural network is a function from $2M$ inputs to $|\mathcal{A}|$ outputs, which are the estimates of the Q-function $Q_{\theta}(s,a)$.  We apply the DQN algorithm of \cite{Mnih2015} to learn a scheduling policy. We create a simple feed-forward neural network of $3$ layers, one of which is the hidden layer with 24 neurons. We use \textit{Huber loss} \cite{huber1964} and the \textit{Adam} algorithm \cite{Adam2014} to conduct stochastic gradient descent to update the weights of the neural network. 

We exploit two important features of DQNs  as proposed in \cite{Mnih2015}: \textit{experience replay} and a \textit{fixed target network}, both of which provide algorithm stability. For \textit{experience replay}, instead of training the neural network with a single observation $<\!s,a,s',c(s,a)\!>$ at the end of each step,  many experiences (i.e., (state, action, next state, cost) quadruplets) can be stored in the replay memory for batch training, and a minibatch of observations randomly sampled at each step can be used. The DQN uses two neural networks: a target network and an online network. The \textit{target network}, with parameters $\theta^-$, is the same as the online network except that its parameters are updated with the parameters $\theta$ of the online network after every $T$ steps, and $\theta^-$ is kept fixed in other time slots. For a minibatch of observations for training, temporal difference estimation error $e$ for a single observation can be calculated as 
\vspace{-0.1in}
\begin{align}
    \varepsilon \!=\! Q_{\theta}(s\!,\!a)\!-\!(\!-\!c(s,a)+\gamma Q_{\theta^-}(s'\!,\!\argmax Q_{\theta}(s'\!,\!a))).
\end{align} 

\textit{Huber loss} is defined by the squared error term for small estimation errors, and a linear error term for high estimation errors, allowing less dramatic changes in the value functions, further improving the stability. For a given estimation error $\varepsilon$ and loss parameter $d$, the Huber loss function, denoted by $\mathrm{L}^d(\varepsilon)$ is defined as:
\begin{align*}
\mathrm{L}^d(\varepsilon)=
    \begin{cases}
    \varepsilon^2 &\textrm { if } \varepsilon\leq d \\
    d (|\varepsilon|-\frac{1}{2}d)) &\textrm { if } \varepsilon> d,
    \end{cases}
\end{align*}
and loss over minibatch $\mathcal{B}$ is computed as:
\begin{align*}
    \mathrm{L}_{\mathcal{B}}= \frac{1}{|\mathcal{B}|}\sum_{<s,a,s',c(s,a)>\in \mathcal{B}} \mathrm{L}^d(\varepsilon).
\end{align*}

We apply the $\epsilon$-greedy policy to balance
 exploration and exploitation, i.e., with probability $\epsilon$ the source randomly selects an action, and with probability $1-\epsilon$ it chooses the action with the minimum Q value.  We let $\epsilon$ decay gradually from $\epsilon_0$ to $\epsilon_{min}$; in other words, the source explores more at the beginning of training and exploits more at the end. The hyperparameters of the DQN algorithm are  tuned for our problem experimentally, and are presented in Table~\ref{table_DQN}.

\begin{table}
\caption{Hyperparameters of DQN algorithm used in the paper}
\begin{footnotesize}
\begin{center}
 \label{table_DQN}
 \scriptsize
\begin{tabular}{ |c|c||c|c| } 
\hline
 Parameter & Value & Parameter & Value \\ 
 \hline \hline
 discount factor $\gamma$ & 0.99 & optimizer & Adam \\ 
 \hline
 minibatch size & 32 & loss function & Huber loss\\ 
 \hline
 replay memory length & 2000 & exploration coeff. $\epsilon_0$ & 1\\ 
 \hline
 learning rate $\alpha$  & $10^{-4}$ & $\epsilon$ decay rate $\beta$ & 0.9\\ 
 \hline
  episode length $T$ & 1000 & $\epsilon_{min}$ & 0.01\\ 
 \hline
 activation function & ReLU & hidden size & 24\\ 
 \hline
\end{tabular}
\end{center}
\end{footnotesize}
\end{table}

\section{Simulation Results}
\label{sec:results_EH}

In this section, we provide numerical results for all the proposed algorithms, and compare the achieved average AoI. Motivated by previous research on HARQ \cite{hybrid2001}, \cite{harq2003}, \cite{IEEEstandard}, we assume that the decoding error reduces exponentially with the number of retransmissions, that is, $g(r)\triangleq p_0 \lambda^{r}$ for some $\lambda \in (0,1)$, where $p_0$ denotes the error probability of the first transmission and $r$ is the retransmission count (set to $0$ for the first transmission). 
The exact value of the rate $\lambda$ depends on the particular HARQ protocol and the channel model. Following the \emph{IEEE 802.16} standard\cite{IEEEstandard}, the maximum number of retransmissions used for decoding is set to $R_{max}=3$. In the following experiments, $\lambda$ and $p_0$ are set to $0.5$. $E^{\mathrm{tx}}$ and $E^{s}$ are both assumed to be constant and equal to 1 unit of energy unless otherwise stated. $\Delta_{max}$ is set to $40$. 

We choose the exact step sizes for the learning algorithms by fine-tuning in order to balance the algorithm stability in the early time steps with nonnegligible step sizes in the later time steps.  In particular, we use step size parameters of  $\alpha(m),\beta(m),\gamma(m)=y/(m+1)^z$,  where $0.5<z\leq 1$ and $y>0$ (which satisfy the convergence conditions), and choose $y$ and $z$ such that the oscillations are low and the convergence rate is high. We have observed that a particular choice of parameters gives similar performance results for scenarios addressed in simulations results. 

DQN algorithm in this section is configured as in Table~\ref{table_DQN} and trained for $500$ episodes. The average AoI for DQN is obtained after $10^5$ time steps and averaged over $100$ runs.

As a baseline, we have also included the performance of a greedy policy, which senses and transmits a new status update whenever there is sufficient energy. It retransmits the last transmitted status update when the energy in the battery is sufficient
only for transmission, and it remains idle otherwise; that is, 
\begin{align}
A^{greedy}_t=
    \begin{cases}
     \idle &\textrm{if } B_t<E^{\mathrm{tx}}, \\
    \sense &\textrm{if } B_t\geq E^{\mathrm{tx}}+E^{s}, \\
   \retx &\textrm{if } E^{\mathrm{tx}}\leq B_t< E^{\mathrm{tx}}+E^{s}. \\
    \end{cases}
\end{align}


\subsection{Memoryless EH Process}

We first investigate the average AoI with HARQ when the EH process, $E_t\in \mathcal{E}=\{0,1\}$, is i.i.d. over time with probability distribution  $Pr(E_t=1)=p_e$, $\forall t$. Figure~\ref{fig:policy_iid} illustrates the policy obtained by the RVI algorithm in Section~\ref{sec:dynamic}. The resulting policy is more likely to transmit if the battery level or the AoI is high as expected. Moreover, the policy tends to retransmit the previous update rather than sensing a new update when the battery level is low and the AoI is high. We can also observe from the figure that the optimal policy exhibits a threshold structure as shown in Theorem~\ref{thm_monotonicty}.
\begin{figure}
    \centering
    \includegraphics[scale=0.4]{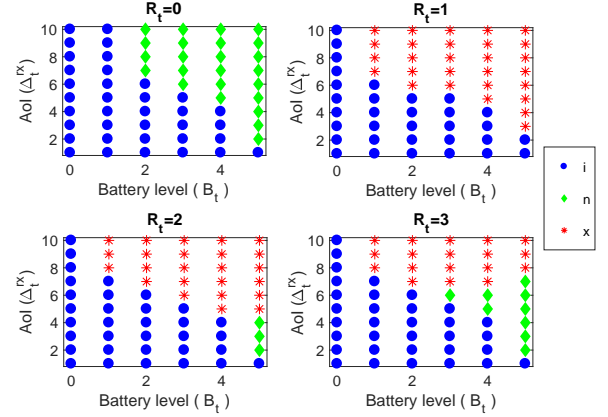}
    \caption{Optimal policy for memoryless EH when $B_{\mathrm{max}}=5$, $R_{max}=3$, $p_e=0.5$ and $E^{\mathrm{s}}=E^{\mathrm{tx}}=1$. The decoding error probabilities are given by $g(r)= 2^{-(r+1)}$.}
    \label{fig:policy_iid}
\end{figure}

 The effects of the battery capacity $B_{\mathrm{max}}$, energy consumption of sensing $E^{\mathrm{s}}$, and the energy harvesting probability $p_e$ on the average AoI are shown in Figure~\ref{fig:effectofB}. As expected, the average AoI increases with decreasing $B_{\mathrm{max}}$, decreasing $p_e$ and increasing $E^{\mathrm{s}}$.  We note that, when $E^{\mathrm{s}}=0$ and $B_{\mathrm{max}}=\infty$, the problem defined in~\eqref{eq:problem} corresponds to minimizing the average AoI under an average transmission rate constraint $p_e$, studied in \cite{wcnc_paper,journal_paper}. The average AoI under average transmission rate constraint ($B_{\mathrm{max}}=\infty$) is also shown in Figure~\ref{fig:effectofB} and we also observe that its performance can be approximated with a finite battery of size of $B_{\mathrm{max}}=30$ at low $p_e$ values, while a battery size of $B_{\mathrm{max}}=5$ is sufficient when $p_e$ increases.

\begin{figure}
\centering
\includegraphics[scale=0.4]{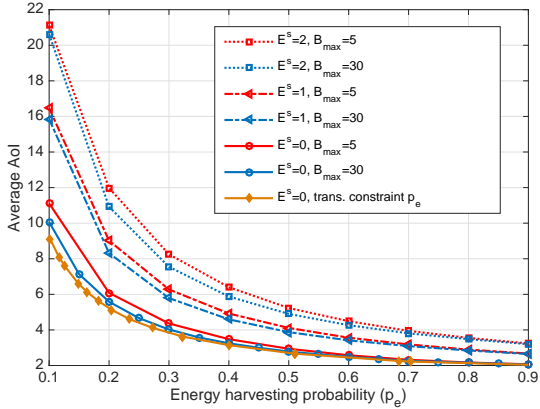}
\caption{Average AoI for different $B_{\mathrm{max}}$, $E^{\mathrm{s}}$ and $p_e$ values for memoryless EH and $E^{\mathrm{tx}}=1$.}
\label{fig:effectofB}
\end{figure}

Figure~\ref{fig:iid05} shows the evolution of the average AoI over time when the average-cost RL algorithms are employed. It can  be observed that the average AoI achieved by the proposed RL algorithms, converge to values close to the one obtained from the RVI algorithm, which has \emph{a priori} knowledge of $g(r)$ and $p_e$, while the AoI of the greedy algorithm is significantly higher. Although GR-learning enjoys theoretical guaranties to converge to the optimal policy, the FDPG which benefits from the structural guarantees of a threshold policy (including a single-threshold policy not allowing preemption of an undecoded status update), performs better than GR-learning since it tries to learn significantly smaller number of threshold values (i.e., $\Delta_{max}\times  B_{\mathrm{max}}\times R_{max}+1$) compared to GR-learning, which learns one value for each state-action pair (i.e., $\Delta_{max}^2 \times B_{\mathrm{max}}\times (R_{max}+1) \times |\mathcal{A}| $).  We also observe that, among the FDPG methods, the one with a single threshold converges faster but the double-threshold policy finally attain on slightly lower AoI. Therefore, the choice between the two may depend on the stochasticity of  the environment. DQN algorithm performs better than GR-learning but it requires a training time before running the simulation and does not have convergence guarantees. Moreover, its final performance is slightly worse than both of the FDPG algorithms.



\begin{figure}
    \centering
    \includegraphics[scale=0.4]{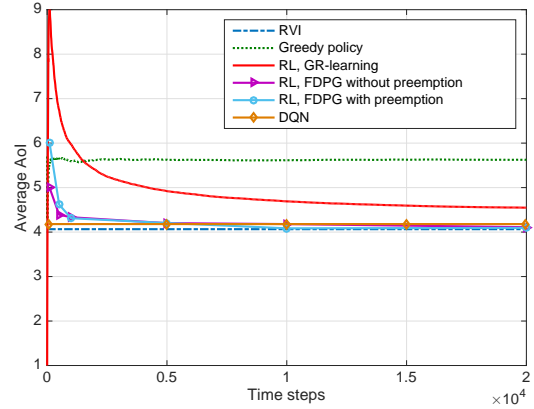}
    \caption{Performance of RL algorithms when $B_{\mathrm{max}}=5$, $E^{\mathrm{s}}=E^{\mathrm{tx}}=1$, and $p_e=0.5$. FDPG with and without preemption represent the double-threshold and the single-threshold policies, respectively.}
    \label{fig:iid05}
\end{figure}

\subsection{Temporally Correlated EH}

Next, we investigate the performance when the EH process has temporal correlations. A symmetric two-state Markovian EH process is assumed, such that $\mathcal{E}=\{0,1\}$  and $Pr(E_{t+1}=1|E_t=0)=Pr(E_{t+1}=0|E_t=1)=0.3$. That is, if the transmitter is in harvesting state, it is more likely to continue harvesting energy, and vice versa for the non-harvesting state. 

Figure~\ref{fig:policy} illustrates the policy obtained by RVI. As it can be seen from the figure, the resulting policy is less likely to transmit if the battery level or the AoI is low. As shown in Theorem~\ref{thm_monotonicty}, the optimal policy exhibits a threshold structure on $\Delta^{rx}$. Moreover, the policy tends to retransmit the previous update rather than sensing a new update when the battery level is low and the AoI is high. When the system is in the non-harvesting state (i.e., $E_t=0$), the transmitter is more conservative in transmitting the status updates compared to the case $E_t=1$, e.g., it might not transmit even if the battery is full depending on the AoI level.  

\begin{figure}
\centering
\begin{subfigure}
\centering
\includegraphics[scale=0.4]{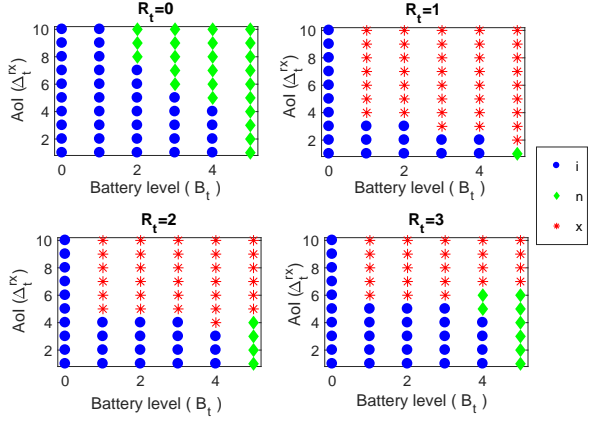}
\end{subfigure} \\
(a) $E_t=1$ \\
\begin{subfigure}
\centering
\includegraphics[scale=0.4]{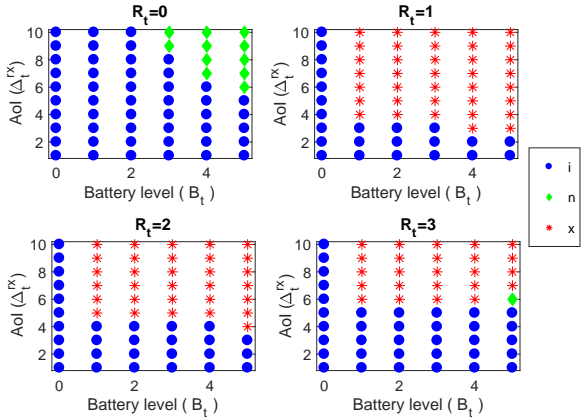}
\end{subfigure}\\
(a) $E_t=0$ \\
\caption{Optimal policy for $B_{\mathrm{max}}=5$, $R_{max}=3$, $p_E(1,1)$, $p_E(0,0)=0.7$, $E^{\mathrm{s}}=E^{\mathrm{tx}}=1$ and $\Delta^{tx}_t=R_t+1$. The decoding error probabilities are given by $g(r)= 2^{-(r+1)}$.}
\label{fig:policy}
\end{figure}

Figure~\ref{fig:learn_corr_time} shows the evolution of the average AoI over time when the average-cost RL algorithms are employed  in this scenario. It can  be observed again that the average AoI achieved by the FDPG method in Section~\ref{sec:policy_gradient} performs very close to the one obtained by the RVI algorithm, which has \emph{a priori} knowledge of $g(r)$ and $p_e$. GR-learning, on the other hand, outperforms the greedy policy but converges to the optimal policy much more slowly, and the gap between the two RL algorithms is  larger compared to the i.i.d. case. Tabular methods in RL, like GR-learning, need to visit each state-action pair infinitely often for RL to converge \cite{Sutton1998}.  GR-learning in the case of  temporally correlated EH does not perform as well as in the i.i.d. case since the state space becomes larger with the addition of the EH state. We also observe that the gap between the final performances of single- and double-threshold FDPG solutions is larger compared to the memoryless EH scenario, while the single threshold solutions till converges faster. DQN algorithm performs better than GR-learning but it requires a training time before running the simulation and does not have convergence guarantees.  Moreover, it still falls short of the final performance of double-threshold FDPG.

Figure~\ref{fig:arq} illustrates the effect of preemption and the performance improvement of double-threshold FDPG over single-threshold FDPG for a scenario where preemption is 	inherently need, e.g., $g(r)$ is same for all retransmissions $r$ representing a standard ARQ protocol and dropping a failed update improves the performance. As it can be seen from Figure~\ref{fig:arq}, although single-threshold FDPG converges very close to the RVI without preemption, the average AoI obtained by single-threshold FDPG is still considerable higher than that of double-threshold FDPG for standard ARQ protocol. 
\begin{figure}[h]
    \centering
    \includegraphics[scale=0.4]{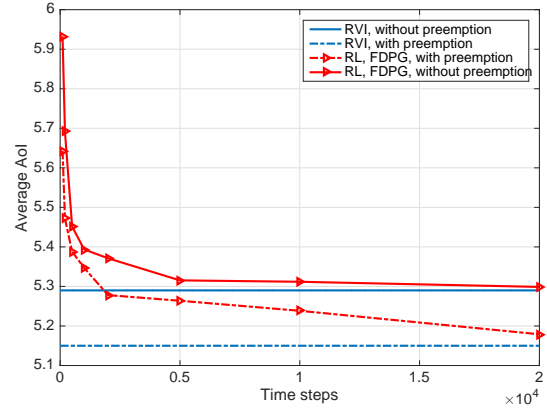}
    \caption{The performance of FDPG algorithms when $B_{\mathrm{max}}=5$, $R_{max}=3$, $p_E(1,1)$, $g(r)=0.5$, $\forall r$,  $p_E(0,0)=0.7$ and $E^{\mathrm{s}}=E^{\mathrm{tx}}=1$. FDPG with and without preemption represent the double-threshold and the single-threshold policies, respectively.}
    \label{fig:arq}
\end{figure}

\begin{figure}
    \centering
    \includegraphics[scale=0.4]{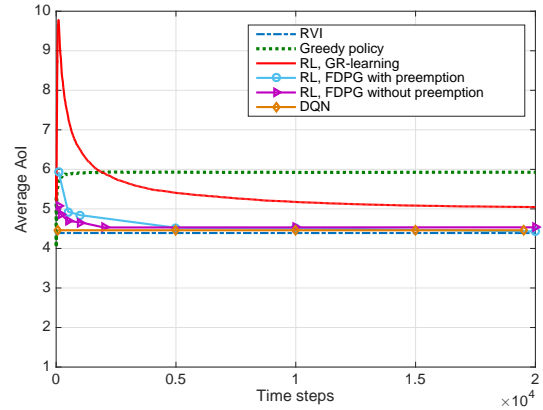}
    \caption{The performance of RL algorithms when $B_{\mathrm{max}}=5$, $p_E(1,1)$, $p_E(0,0)=0.7$ and $E^{\mathrm{s}},E^{\mathrm{tx}}=1$. FDPG with and without preemption represent the double-threshold and the single-threshold policies, respectively.}
    \label{fig:learn_corr_time}
\end{figure}
Next, we investigate the impact of the burstiness of the EH process, measured
by the correlation coefficient between $E_t$ and
$E_{t+1}$. Figure~\ref{fig:learn_corr} illustrates the performance of the proposed RL algorithms for different correlation coefficients, which can be computed easily for the 2-state symmetric Markov chain; that is, $\rho\triangleq (2p_E(1,1)-1)$. Note that $\rho=0$ corresponds to memoryless EH with $p_e=1/2$.   We note that the average AoI is minimized by transmitting new packets successfully at regular intervals, which has been well investigated in previous works \cite{Tan2015,wcnc_paper,Yates2015}. Intuitively, for highly correlated EH, there are either successive transmissions or successive idle time slots, which increases the average AoI. Hence, the AoI is higher for higher values of $\rho$.  Figure~\ref{fig:learn_corr} also shows that both RL algorithms result in much lower average AoI than the greedy policy and FDPG outperforms GR-learning since it benefits from the structural characteristics of a threshold policy.  We can also conclude that the single threshold policy can be preferable in practice especially in highly dynamic environments, as its performance is very close to that of the double threshold FDPG, but with faster convergence.
\begin{figure}
    \centering
    \includegraphics[scale=0.4]{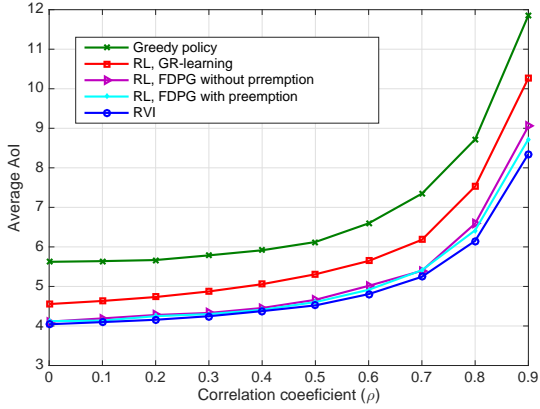}
    \caption{The performance of RL algorithms obtained after $2\cdot10^4$ time steps and averaged over $1000$ runs for different temporal correlation coefficients.}
    \label{fig:learn_corr}
\end{figure}

\section{Conclusions}
\label{sec:conclusion_EH}
We have considered an EH system with a finite size battery and investigated scheduling policies transmitting time-sensitive data over a noisy channel with the average AoI as the performance measure. We have assumed the presence of an Ack/NACK feedback from the receiver, and allowed retransmissions with an HARQ protocol to increase the probability of correct reception of status updates. This results in a trade-off between sending new status updates and retransmitting failed status updates as the former results in a lower AoI at the receiver while the latter is more likely to succeed. This trade-off is exacerbated in the model considered in this paper by the introduction of a sensing cost, which increases the cost of new status updates, and requires  judicious decisions at the transmitter due to limited and stochastic availability of energy.

In addition to identifying a RVI solution for the optimal policy when the system characteristics are known, efficient RL algorithms are  presented for practical applications when the system characteristics may not be known in advance. The effects of the battery size, EH characteristics and the HARQ structure on the average AoI are investigated through numerical simulations. 

We have presented three types of RL algorithms with different levels of complexity and training requirements and compared their performances for the current problem under a variety of system setting. We have observed that FDPG policies that exploit the threshold structure of the optimal policy provide both better performance and convergence behaviour. Moreover a simplified single threshold FDPG alternative is shown to further increase the convergence speed with a negligible increase in the average AoI.

\appendix

\subsection {Proof of Theorem~\ref{thm_monotonicty}:}

\label{AppendixK}

By \eqref{eq:Bellman2_EH} and \eqref{eq:opt_eta_EH}, Theorem~\ref{thm_monotonicty} holds if $Q((e,b,\delta^{rx},\delta^{tx},r),a)$ has a \textit{sub-modular} structure in $(\delta^{rx},a)$~\cite{Topkis1978}: that is, when the difference between the $Q$ function is monotone with respect to the state-action pair ($\delta^{rx},a$) for any $E^h_t=e$, $B_t=b$, $\Delta_t^{tx}=\delta^{tx}$, and $R_t=r$. We show  the submodularity by verifying the following inequality for 3 action pairs $(a_1,a_2)\in\{(\idle,\new),(\idle,\retx),(\new,\retx)\}$:
 \begin{align}
  Q(e,b,\delta^{rx}+1,\delta^{tx},r,a_2) -   Q(e,b,\delta^{rx}+1,\delta^{tx},r,a_1) \nonumber \\ \leq Q(e,b,\delta^{rx},\delta^{tx},r,a_2) -   Q(e,b,\delta^{rx},\delta^{tx},r,a_1)
  \label{eq:submodular}
  \end{align}
 
 Inequality~\eqref{eq:submodular} can be rewritten for $(a_1,a_2)=(\idle,\new)$ using \eqref{eq:Bellman2_EH},
 \begin{eqnarray}
   &Q(e,b,\delta^{rx},\delta^{tx},r,\sense)= \delta^{rx}+\sum_{e'\in \mathcal{E}}p_E(e,e') \nonumber \\& [g(r)h(e',b+e-E^s-E^{tx},\delta^{rx} +1,1,1)\nonumber \\&+(1-g(r))h(e',b+e-E^s-E^{tx},1,1,0) ]
   \label{eq:bel1}
  \end{eqnarray} and
  \begin{align}
   Q(e,b,\delta^{rx},\delta^{tx},r,\idle)=\delta^{rx} +\sum_{e'\in \mathcal{E}}p_E(e,e') \nonumber \\h(e',b+e,\delta^{rx} +1,\delta^{tx} +1,r)
   \label{eq:bel2}
  \end{align}
\eqref{eq:bel1} and \eqref{eq:bel2} is inserted into \eqref{eq:submodular} and since the next state $E^h_{t+1}= e'$ is independent of action ($A_t$) and  AoI ($\Delta_t$),  the following is equivalent to \eqref{eq:submodular}: 
 \begin{align}
  &g(0) \big[h(e',b+e-E^s-E^{tx},\delta^{rx}+2,1,1) \nonumber \\ &-  h(e',b+e-E^s-E^{tx},\delta^{rx}+1,1,1)\big] \nonumber \\
  &- \big[h(e',b+e,\delta^{rx}+2,\delta^{tx}+1,r) \nonumber \\&- h(e',b+e,\delta^{rx}+1,\delta^{tx}+1,r)\big] \leq 0.
  \label{eq:submodular2}
  \end{align}
  Also we note that $\Delta_{t+1}^{rx}=\delta^{rx}+2$, $\Delta_{t+1}^{tx}=\delta^{tx}+1$ and $R_{t}=r$ are truncated to $\Delta_{max}$, $\Delta_{max}$ and $R_{max}$ respectively.
  
  The same steps can be repeated for $(a_1,a_2)\in(\idle,\retx)$ and $(a_1,a_2)\in(\new,\retx)$, and we obtain the following:
  \begin{align}
  &g(r) \big[h(e',b+e-E^{tx},\delta^{rx}+2,\delta^{tx}+1,r+1) \nonumber \\& -  h(e',b+e-E^{tx},\delta^{rx}+1,\delta^{tx}+1,r+1)\big] \nonumber \\
  &- \big[h(e',b+e,\delta^{rx}+2,\delta^{tx}+1,r)\nonumber  \\ &- h(e',b+e,\delta^{rx}+1,\delta^{tx}+1,r)\big] \leq 0
  \label{eq:submodular3}
  \end{align}
  \begin{align}
  &g(r) \big[h(e',b+e-E^{tx},\delta^{rx}+2,\delta^{tx}+1,r+1) \nonumber \\& -  h(e',b+e-E^{tx},\delta^{rx}+1,\delta^{tx}+1,r+1)\big] \nonumber \\
  &- g(0) \big[h(e',b+e-E^s-E^{tx},\delta^{rx}+2,1,1) \nonumber \\ &-  h(e',b+e-E^s-E^{tx},\delta^{rx}+1,1,1)\big]\leq 0
  \label{eq:submodular22}
  \end{align}
  Therefore,  \eqref{eq:submodular2}, \eqref{eq:submodular3} and \eqref{eq:submodular22} are the necessary and sufficient conditions for submodularity of $Q$ function.

  First, we note that Eqns.~\eqref{eq:submodular2}, \eqref{eq:submodular3} and \eqref{eq:submodular22} hold with equality for $(\delta^{rx}+1,\delta^{rx})=(M,M-1)$. Then, we show by induction that if \eqref{eq:submodular2}, \eqref{eq:submodular3} and \eqref{eq:submodular22} hold for $(\delta^{rx}+2,\delta^{rx}+1)$ then they hold for $(\delta^{rx}+1,\delta^{rx})$. 
  
  
  First, we check for $(a_1,a_2)=(\idle,\retx)$, and assume that $h$ satisfies \eqref{eq:submodular2}, \eqref{eq:submodular3} and \eqref{eq:submodular22}. We define the related $Q$ functions with optimal actions denoted by $a^*_1$, $a^*_2$, $a^*_3$ and $a^*_4$ such that: 
  \begin{eqnarray}
  \lefteqn{Q(e,b\!-\!E^{tx},\delta^{rx}+1,\delta^{tx}+1,r+1,a^*_1)} \nonumber \\ &\triangleq  h(e,b\!-\!E^{tx},\delta^{rx}\!+\!1,\delta^{tx}\!+\!1,r+1,)\!+\!J^*& \label{eq:1} \\
  \lefteqn{Q(e,b-E^{tx},\delta^{rx},\delta^{tx}+1,r+1,a^*_2)} \nonumber \\ &\triangleq h(e,b-E^{tx},\delta^{rx},\delta^{tx}+1,r+1)+J^*& \label{eq:2}\\
  \lefteqn{Q(e,b,\delta^{rx}+1,\delta^{tx}+1,r,a^*_3)}\nonumber \\ &\triangleq 
  h(e,b,\delta^{rx}+1,\delta^{tx}+1,r) +J^* \label{eq:3} \\
  \lefteqn{Q(e,b,\delta^{rx},\delta^{tx},r,a^*_4)\triangleq 
  h(e,b,\delta^{rx},\delta^{tx},r)+J^*}. \label{eq:4}
  \end{eqnarray}
  


 
   We  need to show that \eqref{eq:submodular3} holds for $(\delta^{rx}+1,\delta^{rx})$, which can be rewritten using \eqref{eq:1}, \eqref{eq:2}, \eqref{eq:3} and  \eqref{eq:4}: 
  \begin{align}
  &g(r) \big[Q(e,b-E^{tx},\delta^{rx}+1,\delta^{tx}+1,r+1,a^*_1) \nonumber \\ & -  Q(e,b-E^{tx},\delta^{rx},\delta^{tx}+1,r+1,a^*_2)\big]  \nonumber \\
  &- \big[Q(e,b,\delta^{rx}+1,\delta^{tx}+1,r,a^*_3) \nonumber \\ &-  Q(e,b,\delta^{rx},\delta^{tx},r,a^*_4)\big] \leq 0
  \label{eq:submodular4}
  \end{align}
  
 We add terms $g(r)[-Q(e,b-E^{tx},\delta^{rx}+1,\delta^{tx}+1,r+1,a^*_2)+Q(e,b-E^{tx},\delta^{rx}+1,\delta^{tx}+1,r+1,a^*_2)]$ and $[Q(e,b,\delta^{rx},\delta^{tx},r,a^*_3)-Q(e,b,\delta^{rx},\delta^{tx},r,a^*_3)]$  to the LHS of \eqref{eq:submodular4} and obtain: 
 \begin{align}
  &g(r) \big[Q(e,b-E^{tx},\delta^{rx}+1,\delta^{tx}+1,r+1,a^*_1)\nonumber \\ &-Q(e,b-E^{tx},\delta^{rx}+1,\delta^{tx}+1,r+1,a^*_2)\nonumber \\ &+Q(e,b-E^{tx},\delta^{rx}+1,\delta^{tx}+1,r+1,a^*_2) \nonumber \\ &-  Q(e,b-E^{tx},\delta^{rx},\delta^{tx}+1,r+1,a^*_2)\big] \nonumber \\
  &- \big[Q(e\!,b\!,\delta^{rx}\!+\!1,\delta^{tx}\!+\!1\!,r\!,a^*_3)\! -\! Q(e,b,\delta^{rx}\!,\delta^{tx}\!,r,a^*_3)\nonumber \\ &+Q(e\!,b\!,\delta^{rx},\delta^{tx},r,a^*_3) -Q(e\!,b\!,\delta^{rx},\delta^{tx},r,a^*_4)\big] \leq 0
  \label{eq:submodular5}
  \end{align}
 $Q(e,b_1,\delta+1,r+1,a^*_1)-Q(e,b_1,\delta+1,r+1,a^*_2)$ is smaller than $0$ from the optimality of $a^*_1$. Similarly, $Q(e,b,\delta^{rx},\delta^{tx},r,a^*_4)- Q(e,b,\delta^{rx},\delta^{tx},r,a^*_3)$  is smaller than $0$ from the optimality of $a^*_4$. Then, we only need to show:
 \begin{align}
     &g(r)[Q(e,b-E^{tx},\delta^{rx}+1,\delta^{tx}+1,r+1,a_2^*) \nonumber \\ &-Q(e,b-E^{tx},\delta^{rx},\delta^{tx}+1,r+1,a_2^*)] \nonumber \\ &-Q(e,b,\delta^{rx}+1,\delta^{tx}+1,r,a_3^*)\nonumber\\&+Q(e,b,\delta^{rx},\delta^{tx},r,a_3^*) \leq 0.
     \label{eq:submodular6}
 \end{align}
 
 The condition in \eqref{eq:submodular6} can be checked for all possible values of $(a_2^*,a_3^*)$ pair: 
 First we investigate for the pair $(a^*_2,a^*_3)=(\idle,\idle)$; that is, 
 \begin{align}
    g(r)\big[& 1+h(e',b-E^{tx}+e,\delta^{rx}+2,\delta^{tx}+1,r+1)\nonumber \\ & -h(e',b-E^{tx}+e,\delta^{rx}+1,\delta^{tx}+1,r+1)\big]\nonumber \\
     & -1 + h(e',b+e,\delta^{rx}+1,\delta^{tx}+1,r) \nonumber \\  &- h(e',b+e,\delta^{rx}+2,\delta^{tx}+1,r) \leq 0.
     \label{eq:submodular10}
 \end{align}
LHS of \eqref{eq:submodular10} is equivalent to LHS of \eqref{eq:submodular3} plus the term $g(r)-1$, which is smaller than and equal to $0$ since \eqref{eq:submodular3} holds  and $g(r) \leq 1$. 
 For pair $(a^*_2,a^*_3)=(\retx,\retx)$; that is,
 \begin{align}
 &g(r)\big[1+g(r+1)h(e',b-2E^{tx}+e,\delta+2,r+2)\nonumber
     \\&-g(r\!+\!1)h(e',b\!-\!2E^{tx}\!+\!e,\delta^{rx}\!+\!1,\delta^{tx}\!+\!1,r\!+\!1)\big]\nonumber\\ &-1-g(r)h(e',b\!-\!E^{tx}\!+\!e,\delta^{rx}+2,\delta^{tx}+1,r+1)\nonumber
     \\&+g(r)h(e',b-E^{tx}+e,\delta^{rx}\!+\!1,\delta^{tx}\!+\!1,r\!+\!1) \leq 0
     \label{eq:submodular11}
\end{align}
     Similarly, $g(r)-1$ is less than 0 and \eqref{eq:submodular11} holds.

For pair $(a^*_2,a^*_3)=(\idle,\retx)$:
 \begin{align}
 &g(r)\big[1+h(e',b-E^{tx}+e,\delta^{rx}+2,\delta^{tx}+1,r+1)\nonumber\\&-h(e',b-E^{tx}+e,\delta^{rx}+1,\delta^{tx}+1,r+1)\big] \nonumber \\
 &-\!1\!-\!g(r)h(e,e',b\!-\!E^{tx}\!+\!e,\delta^{rx}\!+\!2,\delta^{tx}\!+\!1,r\!+1\!)\nonumber
     \\&+g(r)h(e',b\!-\!E^{tx}\!+\!e,\delta^{rx}\!+\!1,\delta^{tx}\!+\!1,r+1) \leq 0,
 \end{align}
which is equal to:
\begin{align}
   g(r)-1 \leq 0.
\end{align}

For pair $(a^*_2,a^*_3)=(\retx,\idle)$:
 \begin{align}
      &g(r)\big[1+g(r+1)h(e',b-2E^{tx}+e,\delta+2,r+2)\nonumber
     \\&-g(r\!+\!1)h(e',b\!-\!2E^{tx}\!+\!e,\delta^{rx}\!+\!1,\delta^{tx}\!+\!1,r\!+\!1)\big]\nonumber\\  & -1 + h(e',b+e,\delta^{rx}+1,\delta^{tx}+1,r) \nonumber \\ &- h(e',b+e,\delta^{rx}+2,\delta^{tx}+1,r) \leq 0.
\end{align}
which is equal to:
\begin{align}
    &g(r)g(r+1)\big[ h(e',b-2E^{tx},\delta+2,r+2)\nonumber \\&-h(e',b-2E^{tx}+e,\delta^{rx}+1,\delta^{tx}+1,r+1)\big] \nonumber \\  &-h(e',b+e,\delta^{rx}+2,\delta^{tx}+1,r)\nonumber \\  &+h(e',b+e,\delta^{rx}+1,\delta^{tx}+1,r) \nonumber \\ &+g(r)-1 + \leq 0.
    \label{c10}
\end{align}

From \eqref{eq:submodular3},  $-h(e',b+e,\delta^{rx}+2,\delta^{tx}+1,r)+h(e',b+e,\delta^{rx}+1,\delta^{tx}+1,r) \leq g(r)(-h(e',b+e-E^{tx},\delta^{rx}+2,\delta^{tx}+1,r+1)+h(e',b+e-E^{tx},\delta+1,r+1))$, and $1-g(r)\leq 0$; thus \eqref{c10} is smaller than
\begin{align}
&g(r)\bigg\{g(r+1)\big[ h(e',b-2E^{tx},\delta+2,r+2)\nonumber \\&-h(e',b-2E^{tx}+e,\delta^{rx}+1,\delta^{tx}+1,r+1)\big] \nonumber \\  &+(-h(e',b+e-E^{tx},\delta^{rx}+2,\delta^{tx}+1,r+1) \nonumber\\ &+h(e',b+e-E^{tx},\delta+1,r+1))\bigg\},
\end{align}
which is smaller than 0, since the expression inside the braces is equivalent to \eqref{eq:submodular3} with $r \to r+1$ and $g(r)\geq 0$.

The same holds for $(a^*_2,a^*_3)=(\retx,\new)$ and $(a^*_2,a^*_3)=(\new,\retx)$. Similar steps could be followed for other $(a_1,a_2)=(\idle,\new)$ and $(a_1,a_2)=(\new,\retx)\}$ pairs and are not included in this paper due to space limitations. 


Thus, the condition is satisfied, i.e., $Q$ function is submodular in $(\delta^{rx},a)$.  From \cite{Topkis1978}, it can be concluded that the status update policy is of threshold-type.

\bibliography{thesis1}

\end{document}